\DeclareMathOperator{\diag}{diag}         
\def\rbb{\mathbb{R}}
\def\cbb{\mathbb{C}}
\def\trp{^T}
\def\diag{{\rm diag}}
\def\half{\frac{1}{2}}
\newtheorem{theorem}{Theorem}
\newtheorem{lemma}{Lemma}
\newtheorem{definition}{Definition}
\newtheorem{remark}{Remark}
\title{\LARGE \bf Implementation of a Distributed Coherent Quantum Observer}
\author{Ian R.~Petersen and Elanor H. Huntington
\thanks{This work was supported by the Air Force Office of Scientific
Research (AFOSR). This material is based on research sponsored by the
Air Force Research Laboratory, under agreement number FA2386-16-1-4065.  The U.S. Government is authorized to reproduce and
distribute reprints for Governmental purposes notwithstanding any
copyright notation thereon.
The views and conclusions contained herein are those of the authors
and should not be interpreted as necessarily representing the official
policies or endorsements, either expressed or implied, of the Air
Force Research Laboratory or the U.S. Government. This work was also supported by the
Australian Research Council (ARC). }%
\thanks{Ian R. Petersen and Elanor H. Huntington are with the 
Research School of Engineering, The Australian National University, Canberra, ACT 0200,
Australia. Email: i.r.petersen@gmail.com, Elanor.Huntington@anu.edu.au.}
}%
\begin{document}

\maketitle
\thispagestyle{empty}
\pagestyle{empty}

\begin{abstract}
This paper considers the problem of implementing a previously proposed  distributed direct coupling quantum observer for a closed linear quantum system. By modifying the form of the previously proposed observer, the paper proposes a possible experimental implementation of the observer plant system using a non-degenerate parametric amplifier and a chain of optical cavities which are coupled together via optical interconnections. It is shown that the distributed observer converges to a consensus in a time averaged sense in which an output of each element of the observer estimates the specified output of the quantum plant.
\end{abstract}

\section{Introduction} \label{sec:intro}
In this paper we build on the results of \cite{PET14Ca} by providing a possible experimental implementation of a direct coupled distributed quantum observer. A number of papers have recently considered the problem of constructing a coherent quantum observer for a quantum system; e.g., see \cite{VP9a,MEPUJ1a,MJP1}. In the coherent quantum observer problem, a quantum plant is coupled to a quantum observer which is also a quantum system. The quantum observer is constructed to be a physically realizable quantum system  so that the system variables of the quantum observer converge in some suitable sense to the system variables of the quantum plant. The papers \cite{PET14Aa,PET14Ba,PET14Ca,PET14Da}  considered the problem of constructing a direct coupling quantum observer for a given quantum system. 

In the papers  \cite{MJP1,VP9a,PET14Aa,PET14Ca}, the quantum plant under consideration is a linear quantum system. In recent years, there has been considerable interest in the modeling and feedback control of linear quantum systems; e.g., see \cite{JNP1,NJP1,ShP5}.
Such linear quantum systems commonly arise in the area of quantum optics; e.g., see
\cite{GZ00,BR04}. In addition, the papers \cite{PeHun1a,PeHun2a} have considered the problem of providing a possible experimental implementation of the  direct coupled observer described in \cite{PET14Aa} for the case in which the quantum plant is a single quantum harmonic oscillator and the quantum observer is a single quantum harmonic oscillator. For this case, \cite{PeHun1a,PeHun2a} show that a possible experimental implementation of the augmented quantum plant and quantum observer system may be constructed using a non-degenerate parametric amplifier (NDPA) which is coupled to a beamsplitter by suitable choice of the amplifier and beamsplitter parameters; e.g., see \cite{BR04} for a description of an NDPA. In this paper, we consider the issue of whether a similar experimental implementation may be provided for the distributed direct coupled quantum observer proposed in \cite{PET14Ca}.

The paper \cite{PET14Ca} proposes a direct coupled distributed quantum observer which is constructed via the direct connection of many quantum harmonic oscillators in a chain as illustrated in Figure \ref{F0}. It is shown that this quantum network can be constructed so that each output of the direct coupled distributed quantum observer converges to the plant output of interest in a time averaged sense. This is a form of time averaged quantum consensus for the quantum networks under consideration. However, the experimental implementation approach of \cite{PeHun1a,PeHun2a} cannot be extended in a straightforward way to the direct coupled distributed quantum observer \cite{PET14Ca}. This is because it is not feasible to extend the NDPA used in 
\cite{PeHun1a,PeHun2a} to allow for the multiple direct couplings to the multiple observer elements required in the theory of \cite{PET14Ca}. Hence, in this paper, we modify the theory of \cite{PET14Ca} to develop a new direct coupled distributed observer in which there is direct coupling only between the plant and the first element of the observer. All of the other couplings between the different elements of the observer are via optical field couplings. This is illustrated in Figure \ref{F1}.  Also, all of the elements of the observer except for the first one are implemented as passive optical cavities. The only active element in the augmented plant observer system is a single NDPA used to implement the plant and first observer element. These features mean that the proposed direct coupling observer is much easier to implement experimentally that the observer which was proposed in \cite{PET14Ca}. 

\begin{figure}[htbp]
\begin{center}
\includegraphics[width=8cm]{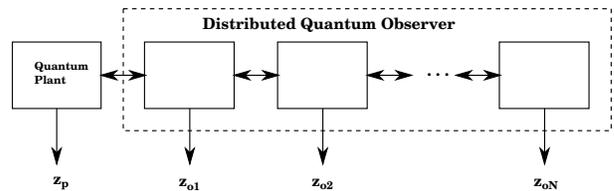}
\end{center}
\caption{Distributed Quantum Observer of \cite{PET14Ca}.}
\label{F0}
\end{figure}

\begin{figure*}[htbp]
\begin{center}
\includegraphics[width=16cm]{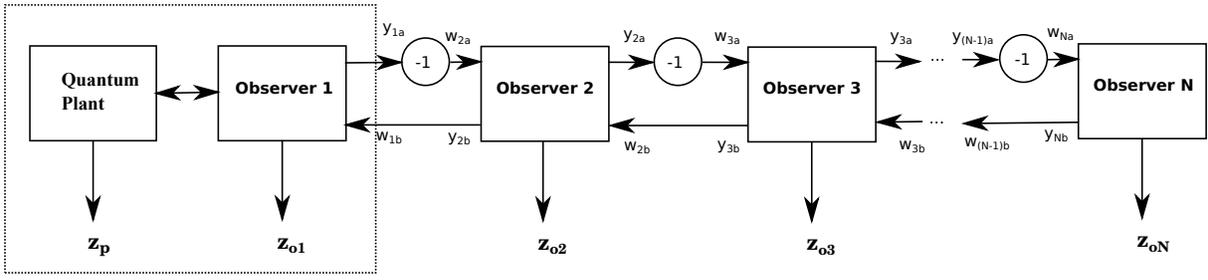}
\end{center}
\caption{Distributed Quantum Observer Proposed in This Paper.}
\label{F1}
\end{figure*}

We establish that the distributed quantum observer proposed in this paper has very similar properties to the distributed quantum observer proposed in \cite{PET14Ca} in that each output of the distributed observer converges to the plant output of interest in a time averaged sense. However, an important difference between the observer proposed in  \cite{PET14Ca} and the observer proposed in this paper is that in \cite{PET14Ca} the output for each observer element corresponded to the same quadrature whereas in this paper, different quadratures are used to define the outputs with a $90^\circ$ phase rotation as we move from observer element to element along the chain of observers.

\section{Quantum  Linear Systems}
In the distributed quantum observer problem under consideration, both the quantum plant and the distributed quantum observer are linear quantum systems; see also \cite{JNP1,GJ09,ZJ11}. 
The quantum mechanical behavior of a linear quantum system is described in terms of the system \emph{observables} which are self-adjoint operators on an underlying infinite dimensional complex Hilbert space $\mathfrak{H}$.   The commutator of two scalar operators $x$ and $y$ on ${\mathfrak{H}}$ is  defined as $[x, y] = xy - yx$.~Also, for a  vector of operators $x$ on ${\mathfrak H}$, the commutator of ${x}$ and a scalar operator $y$ on ${\mathfrak{H}}$ is the  vector of operators $[{x},y] = {x} y - y {x}$, and the commutator of ${x}$ and its adjoint ${x}^\dagger$ is the  matrix of operators 
\[ [{x},{x}^\dagger] \triangleq {x} {x}^\dagger - ({x}^\# {x}^T)^T, \]
where ${x}^\# \triangleq (x_1^\ast\; x_2^\ast \;\cdots\; x_n^\ast )^T$ and $^\ast$ denotes the operator adjoint. 

The dynamics of the closed linear quantum systems under consideration are described by non-commutative differential equations of the form
\begin{eqnarray}
\dot x(t) &=& Ax(t); \quad x(0)=x_0
 \label{quantum_system}
\end{eqnarray}
where $A$ is a real matrix in $\rbb^{n
\times n}$, and $ x(t) = [\begin{array}{ccc} x_1(t) & \ldots &
x_n(t)
\end{array}]\trp$ is a vector of system observables; e.g., see \cite{JNP1}. Here $n$ is assumed to be an even number and $\frac{n}{2}$ is the number of modes in the quantum system. 

The initial system variables $x(0)=x_0$ 
are assumed to satisfy the {\em commutation relations}
\begin{equation}
[x_j(0), x_k(0) ] = 2 \imath \Theta_{jk}, \ \ j,k = 1, \ldots, n,
\label{x-ccr}
\end{equation}
where $\Theta$ is a real skew-symmetric matrix with components
$\Theta_{jk}$.  In the case of a
single quantum harmonic oscillator, we will choose $x=(x_1, x_2)^T$ where
$x_1=q$ is the position operator, and $x_2=p$ is the momentum
operator.  The
commutation relations are  $[q,p]=2 i$.
In general, the matrix $\Theta$ is assumed to be  of the  form
\begin{equation}
\label{Theta}
\Theta=\diag(J,J,\ldots,J)
\end{equation}
 where $J$ denotes the real skew-symmetric $2\times 2$ matrix
$$
J= \left[ \begin{array}{cc} 0 & 1 \\ -1 & 0
\end{array} \right].$$

The system dynamics (\ref{quantum_system}) are determined by the system Hamiltonian 
which is a self-adjoint operator on the underlying  Hilbert space  $\mathfrak{H}$. For the linear quantum systems under consideration, the system Hamiltonian will be a
quadratic form
$\mathcal{H}=\half x(0)\trp R x(0)$, where $R$ is a real symmetric matrix. Then, the corresponding matrix $A$ in 
(\ref{quantum_system}) is given by 
\begin{equation}
A=2\Theta R \label{eq_coef_cond_A}.
\end{equation}
 where $\Theta$ is defined as in (\ref{Theta});
e.g., see \cite{JNP1}.
In this case, the  system variables $x(t)$ 
will satisfy the {\em commutation relations} at all times:
\begin{equation}
\label{CCR}
[x(t),x(t)^T]=2\imath \Theta \ \mbox{for all } t\geq 0.
\end{equation}
That is, the system will be \emph{physically realizable}; e.g., see \cite{JNP1}.

\begin{remark}
\label{R1}
Note that that the Hamiltonian $\mathcal{H}$ is preserved in time for the system (\ref{quantum_system}). Indeed,
$ \mathcal{\dot H} = \frac{1}{2}\dot{x}^TRx+\frac{1}{2}x^TR\dot{x} = -x^TR\Theta R x + x^TR\Theta R x = 0$ since $R$ is symmetric and $\Theta$ is skew-symmetric.
\end{remark}

\section{Direct Coupling Distributed Coherent Quantum Observers}
In our proposed direct coupling coherent quantum observer, the quantum plant is a single quantum harmonic oscillator which is a linear quantum system of the form (\ref{quantum_system}) described by the non-commutative differential equation
\begin{eqnarray}
\dot x_p(t) 
&=& A_px_p(t); \quad x_p(0)=x_{0p}; \nonumber \\
z_p(t) &=& C_px_p(t)
 \label{plant}
\end{eqnarray}
where $z_p(t)$ denotes the vector of system variables to be estimated by the observer and $ A_p \in \rbb^{2
\times 2}$, $C_p\in \rbb^{1 \times 2}$. 
It is assumed that this quantum plant corresponds to a plant Hamiltonian
$\mathcal{H}_p=\half x_p(0)\trp R_p x_p(0)$. Here $x_p = \left[\begin{array}{l}q_p\\p_p\end{array}\right]$ where
$q_p$ is the plant position operator and $p_p$ is the plant momentum operator. As in \cite{PET14Ca}, in the sequel we will assume that $A_p= 0$. 

We now describe the linear quantum system of the form (\ref{quantum_system}) which will correspond to the distributed quantum observer; see also \cite{JNP1,GJ09,ZJ11}. 
This system is described by a non-commutative differential equation of the form
\begin{eqnarray}
\dot x_o(t) &=& A_ox_o(t);\quad x_o(0)=x_{0o};\nonumber \\
z_o(t) &=& C_ox_o(t)
 \label{observer}
\end{eqnarray}
where the observer output $z_o(t)$ is the distributed observer estimate vector and $ A_o \in \rbb^{n_o
\times n_o}$, $C_o\in \rbb^{\frac{n_o}{2} \times n_o}$.  Also,  $x_o(t)$  is a vector of self-adjoint 
non-commutative system variables; e.g., see \cite{JNP1}. We assume the distributed observer order $n_o$  is an even number with $N=\frac{n_o}{2}$ being the number of elements in the distributed quantum observer. We also assume that the plant variables commute with the observer variables. 
We will assume that the distributed quantum observer has a chain structure and is coupled to the quantum plant as shown in Figure \ref{F1}. 
Furthermore, we write
\[
z_o = \left[\begin{array}{l}z_{o1}\\z_{o2}\\\vdots\\z_{oN}\end{array}\right]
\]
where 
\[
z_{oi} = C_{oi}x_{oi} \mbox{ for }i=1,2,\ldots,N.
\]
Note that  $C_{oi} \in \rbb^{1 \times 2}$.

The augmented quantum linear system consisting of the quantum plant (\ref{plant})  and the distributed  quantum observer (\ref{observer}) is then a quantum system of the form (\ref{quantum_system}) described by equations of the form 
where
\begin{eqnarray}
\left[\begin{array}{l}\dot x_p(t)\\\dot x_{o1}(t)\\\dot x_{o2}(t)\\\vdots\\\dot x_{oN}(t)\end{array}\right] &=& 
A_a\left[\begin{array}{l} x_p(t)\\ x_{o1}(t)\\x_{o2}(t)\\\vdots\\x_{oN}(t)\end{array}\right];\nonumber \\
z_p(t) &=& C_px_p(t);\nonumber \\
z_o(t) &=& C_ox_o(t)
\label{augmented_system}
\end{eqnarray}
where 
\[
C_o = \left[\begin{array}{llll}C_{o1} & & &\\
                               & C_{o2}& 0 &\\
                               & 0 & \ddots & \\
                               &&& C_{oN}
\end{array}\right].
\]

We now formally define the notion of a direct coupled linear quantum observer.

\begin{definition}
\label{D1}
The  {\em distributed linear quantum observer} (\ref{observer}) is said to achieve time-averaged consensus convergence for the quantum plant (\ref{plant}) if the corresponding augmented linear quantum system  (\ref{augmented_system}) is such that
\begin{equation}
\label{average_convergence}
\lim_{T \rightarrow \infty} \frac{1}{T}\int_{0}^{T}(\left[\begin{array}{l}1\\1\\\vdots\\1\end{array}\right]z_p(t) 
- z_o(t))dt = 0.
\end{equation}
\end{definition}
\section{Implementation of a  Distributed Quantum Observer} \label{sec:intro}

We will consider a distributed quantum observer which has a chain structure and is coupled to the quantum plant as shown in Figure \ref{F1}. 
In this distributed quantum observer, there a direct coupling between the quantum plant and the first quantum observer. This direct coupling is determined by a coupling Hamiltonian which defines the coupling between the quantum plant and the first element of the distributed quantum observer:
\begin{equation}
\label{Rc}
\mathcal{H}_c = x_{p}(0)\trp R_{c} x_{o1}(0).
\end{equation}
However, in contrast to \cite{PET14Ca}, there is field coupling between the first quantum observer and all other quantum observers in the chain of observers. The motivation for this structure is that it would be much easier to implement experimentally than the structure proposed in \cite{PET14Ca}. Indeed, the subsystem consisting of the quantum plant and the first quantum observer can be implemented using an NDPA and a beamsplitter in a similar way to that described in \cite{PeHun1a,PeHun2a}; see also \cite{BR04} for further details on NDPAs and beamsplitters. This is illustrated in Figure \ref{F2}.

\begin{figure}[htbp]
\begin{center}
\includegraphics[width=8cm]{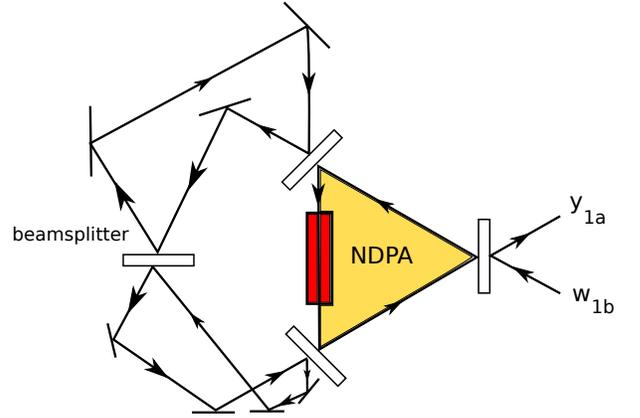}
\end{center}
\caption{NDPA coupled to a beamsplitter representing the quantum plant and first quantum observer.}
\label{F2}
\end{figure}

Also, the remaining quantum observers in the distributed quantum observer are implemented as simple cavities as shown in Figure \ref{F3}. 

\begin{figure}[htbp]
\begin{center}
\includegraphics[width=8cm]{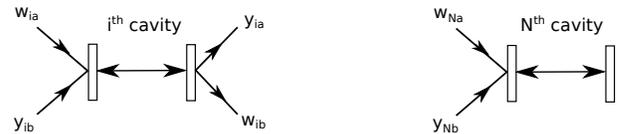}
\end{center}
\caption{Optical cavity implementation of the remaining quantum observers in the distributed quantum observer.}
\label{F3}
\end{figure}

The proposed quantum optical implementation of a distributed quantum observer is simpler than that of \cite{PET14Ca}. However, its dynamics are somewhat different than those of the distributed quantum observer proposed in \cite{PET14Ca}. We now proceed to analyze these dynamics. Indeed, using the results of \cite{PeHun2a}, we can write down quantum stochastic differential equations  (QSDEs) describing the plant-first observer system shown in Figure \ref{F2}:
\begin{eqnarray}
\label{augmented1}
d x_p &=& 2 J \alpha\beta^T x_{o1} dt;\nonumber \\
d x_{o1} &=& 2\omega_1 J x_{o1} dt  -\frac{1}{2}\kappa_{1b} x_{o1} dt + 2J\beta\alpha^T x_p dt\nonumber \\
&& - \sqrt{\kappa_{1b}}dw_{1b}; \nonumber \\
dy_{1a} &=& \sqrt{\kappa_{1b}} x_{o1} dt + dw_{1b}
\end{eqnarray}
where $x_p = \left[\begin{array}{c} q_p \\ p_p \end{array}\right]$ is the vector of position and momentum operators for the quantum plant and $x_{o1} = \left[\begin{array}{c} q_1 \\ p_1 \end{array}\right]$ is the vector of position and momentum operators for the first quantum observer. 
Here, $\alpha \in \rbb^2$, $\beta  \in \rbb^2$ and  $\kappa_{1b} > 0$ are parameters which depend on the parameters of the beamsplitter and the NDPA. The parameters $\alpha$ and $\beta$ define the coupling Hamiltonian matrix defined in (\ref{Rc}) as follows:
\begin{equation}
\label{Rc_def}
R_c = \alpha\beta^T.
\end{equation}
In addition, the parameters of the beamsplitter and the NDPA need to be chosen as described in  
\cite{PeHun1a,PeHun2a} in order to obtain QSDEs of the required form (\ref{augmented1}). 

The QSDEs describing the $ith$ quantum observer for $i=2,3,\ldots,N-1$ are as follows:
\begin{eqnarray}
\label{cavity_i}
d x_{oi} &=& 2\omega_i J x_{oi} dt  -\frac{\kappa_{ia}+\kappa_{ib}}{2} x_{oi} dt\nonumber \\
&&- \sqrt{\kappa_{ia}}dw_{ia}- \sqrt{\kappa_{ib}}dw_{ib}; \nonumber \\
dy_{ia} &=& \sqrt{\kappa_{ib}} x_{oi} dt + dw_{ib}; \nonumber \\
dy_{ib} &=& \sqrt{\kappa_{ia}} x_{oi} dt + dw_{ia}
\end{eqnarray}
where  $x_{oi} = \left[\begin{array}{c} q_i \\ p_i \end{array}\right]$ is the vector of position and momentum operators for the $ith$ quantum observer; e.g., see \cite{BR04}. Here $\kappa_{ia} >0$ and $\kappa_{ib} > 0$ are parameters relating to the reflectivity of each of the partially reflecting mirrors which make up the cavity. 

The QSDEs describing the $Nth$ quantum observer  are as follows:
\begin{eqnarray}
\label{cavity_N}
d x_{oN} &=& 2\omega_N J x_{oN} dt  -\frac{\kappa_{Na}}{2} x_{oN} dt - \sqrt{\kappa_{Na}}dw_{Na}; \nonumber \\
dy_{Nb} &=& \sqrt{\kappa_{Na}} x_{oN} dt + dw_{Na}
\end{eqnarray}
where  $x_{oN} = \left[\begin{array}{c} q_N \\ p_N \end{array}\right]$ is the vector of position and momentum operators for the $Nth$ quantum observer. Here $\kappa_{Na} >0$ is a parameter relating to the reflectivity of the partially reflecting mirror in  this cavity. 

In addition to the above equations, we also have the following equations which describe the interconnections between the observers as in Figure \ref{F1}:
\begin{eqnarray}
\label{interconnection_i}
w_{(i+1)a} &=& - y_{ia}; \nonumber \\
w_{ib} &=& y_{(i+1)b}
\end{eqnarray}
for $i = 1,2,\ldots,N-1$. 

In order to describe the augmented system consisting of the quantum plant and the quantum observer, we now combine equations (\ref{augmented1}), (\ref{cavity_i}), (\ref{cavity_N})  and (\ref{interconnection_i}). Indeed, starting with observer $N$, we have from  (\ref{cavity_N}), (\ref{interconnection_i})
\[
dy_{Nb} = \sqrt{\kappa_{Na}} x_{oN} dt - dy_{(N-1)a}.
\]
But from (\ref{cavity_i}) with $i= N-1$,
\[
dy_{(N-1)a} = \sqrt{\kappa_{(N-1)b}} x_{o(N-1)} dt + dw_{(N-1)b}.
\]
Therefore,
\begin{eqnarray*}
dy_{Nb} &=& \sqrt{\kappa_{Na}} x_{oN} dt - \sqrt{\kappa_{(N-1)b}} x_{o(N-1)} dt - dw_{(N-1)b}\nonumber \\
&=&  \sqrt{\kappa_{Na}} x_{oN} dt - \sqrt{\kappa_{(N-1)b}} x_{o(N-1)} dt - dy_{Nb}
\end{eqnarray*}
using (\ref{interconnection_i}). Hence, 
\begin{equation}
\label{yNb}
dy_{Nb} = \frac{\sqrt{\kappa_{Na}}}{2} x_{oN} dt - \frac{\sqrt{\kappa_{(N-1)b}}}{2} x_{o(N-1)} dt. 
\end{equation}
From this, it follows using (\ref{cavity_N}) that
\begin{eqnarray*}
dw_{Na} &=& -\sqrt{\kappa_{Na}} x_{oN} dt + dy_{Nb}\nonumber \\
&=& -\frac{\sqrt{\kappa_{Na}}}{2} x_{oN} dt - \frac{\sqrt{\kappa_{(N-1)b}}}{2} x_{o(N-1)} dt. 
\end{eqnarray*}
Then,  using  (\ref{cavity_N}) we obtain the equation
\begin{equation}
\label{xN}
d x_{oN} = 2\omega_N J x_{oN} dt  +  \frac{\sqrt{\kappa_{(N-1)b}\kappa_{Na}}}{2} x_{o(N-1)} dt.
\end{equation}

We now consider observer $N-1$. Indeed, it follows from (\ref{cavity_i}) and (\ref{interconnection_i}) with $i = N-1$ that
\begin{eqnarray}
\label{xN-1a}
d x_{o(N-1)} &=& 2\omega_{N-1} J x_{o(N-1)} dt  \nonumber \\
&&-\frac{\kappa_{(N-1)a}+\kappa_{(N-1)b}}{2} x_{o(N-1)} dt\nonumber \\
&&- \sqrt{\kappa_{(N-1)a}}dw_{(N-1)a}- \sqrt{\kappa_{(N-1)b}}dy_{Nb} \nonumber \\
 &=& 2\omega_{N-1} J x_{o(N-1)} dt  \nonumber \\
&& -\frac{\kappa_{(N-1)a}+\kappa_{(N-1)b}}{2} x_{o(N-1)} dt\nonumber \\
&&- \sqrt{\kappa_{(N-1)a}}dw_{(N-1)a}\nonumber\\
&&- \frac{\sqrt{\kappa_{Na}\kappa_{(N-1)b}}}{2} x_{oN} dt \nonumber \\
&&+ \frac{\kappa_{(N-1)b}}{2} x_{o(N-1)} dt \nonumber \\
&=&  2\omega_{N-1} J x_{o(N-1)} dt   -\frac{\kappa_{(N-1)a}}{2} x_{o(N-1)} dt\nonumber \\
&&- \frac{\sqrt{\kappa_{Na}\kappa_{(N-1)b}}}{2} x_{oN} dt \nonumber \\
&&- \sqrt{\kappa_{(N-1)a}}dw_{(N-1)a}
\end{eqnarray}
using (\ref{yNb}). Now using (\ref{cavity_i}) and (\ref{interconnection_i}) with $i = N-2$, it follows that 
\begin{eqnarray*}
dy_{(N-2)a} &=& \sqrt{\kappa_{(N-2)b}} x_{o(N-2)} dt + dw_{(N-2)b} \nonumber \\
&=& \sqrt{\kappa_{(N-2)b}} x_{o(N-2)} dt + dy_{(N-1)b} \nonumber \\
&=& \sqrt{\kappa_{(N-2)b}} x_{o(N-2)} dt +  \sqrt{\kappa_{(N-1)a}} x_{o(N-1)} dt \nonumber \\
&&+ dw_{(N-1)a}
\end{eqnarray*}
using  (\ref{cavity_i}) with $i = N-1$. 
Hence using (\ref{interconnection_i}) with $i = N-2$, it follows that 
\begin{eqnarray*}
dy_{(N-2)a} &=& \sqrt{\kappa_{(N-2)b}} x_{o(N-2)} dt +  \sqrt{\kappa_{(N-1)a}} x_{o(N-1)} dt \nonumber \\
&&- dy_{(N-2)a}.
\end{eqnarray*}
Therefore 
\[
dy_{(N-2)a} = \frac{\sqrt{\kappa_{(N-2)b}}}{2} x_{o(N-2)} dt +  \frac{\sqrt{\kappa_{(N-1)a}}}{2} x_{o(N-1)} dt.
\]
Substituting this into (\ref{xN-1a}), we obtain 
\begin{eqnarray}
\label{xN-1}
d x_{o(N-1)} &=& 2\omega_{N-1} J x_{o(N-1)} dt  \nonumber \\
&&- \frac{\sqrt{\kappa_{(N-1)b}\kappa_{Na}}}{2} x_{oN} dt \nonumber \\
&&+ \frac{\sqrt{\kappa_{(N-2)b}\kappa_{(N-1)a}}}{2} x_{o(N-2)} dt.
\end{eqnarray}

Continuing this process, we obtain the following QSDEs for the variables $x_{oi}$: 
\begin{eqnarray}
\label{xi}
d x_{oi} &=& 2\omega_i J x_{oi} dt  \nonumber \\
&&- \frac{\sqrt{\kappa_{ib}\kappa_{(i+1)a}}}{2} x_{o(i+1)} dt \nonumber \\
&&+ \frac{\sqrt{\kappa_{(i-1)b}\kappa_{ia}}}{2} x_{o(i-1)} dt
\end{eqnarray}
for $i=2,3, \ldots, N-1$. Finally for $x_{o1}$, we obtain
\begin{eqnarray}
\label{x1}
d x_{o1} &=& 2\omega_1 J x_{o1} dt  - \frac{\sqrt{\kappa_{1b}\kappa_{(2a}}}{2} x_{o2} dt \ + 2J\beta\alpha^T x_p dt.\nonumber \\
\end{eqnarray}

We now observe that the plant equation
\begin{eqnarray}
\label{xp}
d x_p &=& 2 J \alpha\beta^T x_{o1} dt\nonumber \\
\end{eqnarray}
implies that the quantity
\[
z_p = \alpha^T x_p 
\]
satisfies 
\[
d z_p = 2 \alpha^T J \alpha\beta^T x_{o1} dt = 0
\]
since $J$ is a skew-symmetric matrix. Therefore, 
\begin{equation}
\label{zp_const}
z_p(t) = z_p(0) = z_p
\end{equation}
for all $t\geq 0$. 

We now combine equations (\ref{x1}), (\ref{xi}), (\ref{xN}) and write them in vector-matrix form. Indeed, let 
\[
x_o = \left[\begin{array}{c}x_{o1}\\x_{o2}\\\vdots\\x_{oN}\end{array}\right].
\]
Then, we can write
\[
\dot x_o = A_o x_o + B_o z_p
\]
where 
{\small
\begin{eqnarray}
\label{Ao}
\lefteqn{A_o =}\nonumber \\
&&2\left[\begin{array}{rrrrrr} \omega_1 J & -\mu_2 I & & &  & \\
                                  \mu_2 I & \omega_2 J& -\mu_3 I & & 0 & \\
                                          &  \mu_3 I & \omega_3 J& -\mu_4 I && \\
                                          &  & \ddots & \ddots & \ddots  & \\
                                    0     &  &  & \mu_{N-1} I & \omega_{N-1} J & -\mu_N I\\
                                          &  &  &  & \mu_N I & \omega_N J
\end{array}\right], \nonumber \\
\lefteqn{B_o =}\nonumber \\
&& 2\left[\begin{array}{r}J \beta \\ 0 \\ \vdots \\ 0 \end{array}\right]
\end{eqnarray}}
and 
\[
\mu_i = \frac{1}{4} \sqrt{\kappa_{(i-1)b}\kappa_{ia}} > 0
\]
for $i = 2, 3, \ldots , N$. 

To construct a suitable distributed quantum observer, we will further assume that 
\begin{eqnarray}
\label{alphabeta}
\beta &=& -\mu_1 \alpha,\nonumber \\
C_p&=& \alpha^T,
\end{eqnarray}
 where  $\mu_1 > 0$
and
{\small
\begin{eqnarray}
\label{Co}
\lefteqn{C_{o} =}\nonumber \\
 &&\frac{1}{\|\alpha\|^2}\left[\begin{array}{llllll}\alpha^T &&&&&\\
                                & -J\alpha^T &&& 0 &\\
                                && -\alpha^T &&&\\
                                &&& J\alpha^T && \\
                                &0&&& \ddots & \\
                                &&&&& (-J)^{N-1}\alpha^T
\end{array}\right].\nonumber \\
\end{eqnarray}}
This choice of the matrix $C_o$ means that different quadratures are used for the outputs of the elements of the distributed quantum observer with a $90^\circ$ phase rotation as we move from observer element to element along the chain of observers.

In order to construct suitable values for the quantities $\mu_i$ and $\omega_i$, we require that 
\begin{equation}
\label{xobareqn}
A_o\bar x_o +B_oz_p = 0
\end{equation}
where
\[
\bar x_o = \left[\begin{array}{c}\alpha\\J\alpha\\-\alpha \\ -J\alpha \\ \alpha \\ \vdots\\(J)^{N-1}\alpha\end{array}\right]z_p.
\]
This will ensure that the quantity
\begin{equation}
\label{xe}
 x_e = x_o - \bar x_o
\end{equation}
 will satisfy the non-commutative differential equation
\begin{equation}
\label{xedot}
\dot{x}_e = A_o  x_e.
\end{equation}
This, combined with the fact that
\begin{eqnarray}
\label{Coxbar}
C_o\bar{x}_o&=& \frac{1}{\|\alpha\|^2}\left[\begin{array}{llll}\alpha^T &&&\\
                                & -J\alpha^T & 0 &\\
                                &0& \ddots & \\
                                &&& (-J)^{N-1}\alpha^T
\end{array}\right]\nonumber \\
&& \times \left[\begin{array}{l}\alpha\\J\alpha\\ \vdots\\(J)^{N-1}\alpha\end{array}\right]z_p \nonumber \\
&=& \left[\begin{array}{l}1\\1\\\vdots\\1\end{array}\right]z_p
\end{eqnarray}
will be used in establishing   condition (\ref{average_convergence}) 
for the distributed quantum observer.

Now, we require
\begin{eqnarray*}
\lefteqn{A_o\left[\begin{array}{c}\alpha\\J\alpha\\-\alpha \\ -J\alpha \\ \alpha \\ \vdots\\(J)^{N-1}\end{array}\right]+B_o}\nonumber \\
&=&
2\left[\begin{array}{c}
\omega_1J\alpha-\mu_2J\alpha-\mu_1J\alpha\\
\mu_2 \alpha- \omega_2\alpha+ \mu_3\alpha\\ 
\mu_3J\alpha-\omega_3J\alpha+\mu_4J\alpha \\
\vdots\\
\mu_N(J)^{N-2} \alpha+\omega_NJ^N\alpha
\end{array}\right]\nonumber \\
&=& 0.
\end{eqnarray*}
This will be satisfied if and only if 
\[
\left[\begin{array}{c}
\omega_1-\mu_2-\mu_1\\
\mu_2- \omega_2+ \mu_3\\ 
\mu_3- \omega_3+\mu_4 \\
\vdots\\
\mu_N-\omega_N
\end{array}\right] = 0.
\]
That is, we will assume that
\begin{equation}
\label{mui}
\omega_i=\mu_i+\mu_{i+1}
\end{equation}
for $i=1,2,\ldots,N-1$ and
\begin{equation}
\label{muN}
\omega_N=\mu_N. 
\end{equation}

To show that the above candidate distributed quantum observer leads to the satisfaction of the  condition
(\ref{average_convergence}), 
we first note that $x_e$ defined in (\ref{xe}) will satisfy (\ref{xedot}). If we can show that 
\begin{equation}
\label{xeav}
\lim_{T \rightarrow \infty} \frac{1}{T}\int_{0}^{T} x_e(t)dt = 0,
\end{equation}
 then it will follow from (\ref{Coxbar}) and (\ref{xe}) that (\ref{average_convergence}) is satisfied. In order to establish (\ref{xeav}), we first note that we can write
\[
A_o = 2\Theta R_o
\]
where
\begin{small}
\begin{eqnarray*}
\lefteqn{R_o = }\nonumber \\
&\left[\begin{array}{rrrrrr}\omega_1I & \mu_2J & & & &\\
 -\mu_2J & \omega_2I & \mu_3J&  & 0 & \\
&  -\mu_3J & \omega_3I & \mu_4J &&\\
& & \ddots & \ddots  & \ddots &\\
& 0 && -\mu_{N-1}J& \omega_{N-1}I & \mu_NJ\\
&&&&  -\mu_NJ & \omega_NI
\end{array}\right].
\end{eqnarray*}
\end{small}
We will now show that the symmetric matrix $R_o$ is positive-definite.

\begin{lemma}
\label{L1}
The matrix $R_o$ is positive definite.
\end{lemma}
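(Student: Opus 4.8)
The plan is to exhibit an explicit sum-of-squares decomposition of the quadratic form $v\trp R_o v$ and to read off both nonnegativity and definiteness directly from it. First I would partition an arbitrary real vector $v \in \rbb^{n_o}$ conformally with the block structure of $R_o$ as $v = [\,v_1\trp\; v_2\trp\; \cdots\; v_N\trp\,]\trp$ with each $v_i \in \rbb^2$, noting at the outset that $R_o$ is symmetric: the super-diagonal blocks $\mu_{i+1}J$ and the sub-diagonal blocks $-\mu_{i+1}J$ are transposes of one another since $J\trp = -J$.

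Next I would expand the quadratic form. Using $w\trp J u = -u\trp J w$ for the scalar cross terms arising from the off-diagonal blocks, the sub- and super-diagonal contributions combine, and one obtains
\[
v\trp R_o v = \sum_{i=1}^N \omega_i \norm{v_i}^2 + 2\sum_{i=1}^{N-1}\mu_{i+1}\, v_i\trp J v_{i+1}.
\]
The key step is then to complete the square using the defining relations (\ref{mui}) and (\ref{muN}), namely $\omega_i = \mu_i + \mu_{i+1}$ for $i \le N-1$ and $\omega_N = \mu_N$. Substituting these, regrouping the diagonal terms over consecutive index pairs $(i,i+1)$, and invoking the isometry identity $\norm{Jv_{i+1}}^2 = \norm{v_{i+1}}^2$, yields
\[
v\trp R_o v = \mu_1 \norm{v_1}^2 + \sum_{i=1}^{N-1}\mu_{i+1}\,\norm{v_i + J v_{i+1}}^2.
\]
I would verify this identity by matching the coefficient of each $\norm{v_i}^2$ (which recombines into $\mu_i + \mu_{i+1} = \omega_i$, with the two endpoints reproducing $\omega_1$ and $\omega_N$ correctly) and of each cross term $v_i\trp J v_{i+1}$.

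Since every $\mu_i > 0$, this representation shows at once that $v\trp R_o v \ge 0$. To upgrade to strict definiteness I would examine the null case: if $v\trp R_o v = 0$, then every summand vanishes, so $v_1 = 0$ and $v_i + J v_{i+1} = 0$ for $i = 1,\ldots,N-1$. The latter gives $v_{i+1} = J v_i$ (using $J^{-1} = -J$, which follows from $J^2 = -I$), so a trivial induction starting from $v_1 = 0$ forces $v_2 = \cdots = v_N = 0$, hence $v = 0$. Therefore $R_o$ is positive definite.

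The main obstacle is recognizing the correct completion of squares: the telescoping of the diagonal coefficients is exactly what makes the relations $\omega_i = \mu_i + \mu_{i+1}$ align with the cross terms, and the fact that $J$ is an isometry ($\norm{Jw} = \norm{w}$) is precisely what turns $\norm{v_i}^2 + \norm{v_{i+1}}^2 + 2 v_i\trp J v_{i+1}$ into the perfect square $\norm{v_i + J v_{i+1}}^2$. A Gershgorin or diagonal-dominance argument would only deliver semidefiniteness, since the off-diagonal magnitudes in each block row sum to exactly $\omega_i$; the sum-of-squares route is therefore preferable because it produces strict positive definiteness in a single step.
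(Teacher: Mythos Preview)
Your argument is correct and is essentially the same sum-of-squares decomposition used in the paper: the paper first passes to the complex Hermitian matrix $\tilde R_o$ via $a_i = q_i + \imath p_i$ and writes $a_o^\dagger \tilde R_o a_o = \mu_1 |a_1|^2 + \sum_{i=1}^{N-1}\mu_{i+1}|\imath a_i + a_{i+1}|^2$, which is exactly your $\mu_1\|v_1\|^2 + \sum_{i=1}^{N-1}\mu_{i+1}\|v_i + Jv_{i+1}\|^2$ under the identification of multiplication by $\imath$ with the action of $J$. Your real-variable presentation is slightly more direct since it avoids the complexification step, but the algebraic content and the null-space argument are identical.
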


\begin{proof}
In order to establish this lemma, let 
\[
x_o =  \left[\begin{array}{l}x_{o1}\\x_{o2}\\\vdots\\x_{oN}\end{array}\right] \in \rbb^{2N}
\]
where  $x_{oi} = \left[\begin{array}{c} q_i \\ p_i \end{array}\right] \in \rbb^2$ for $i = 1,2, \ldots, N$. Also, define the complex scalars $a_i = q_i + \imath p_i$ for $i=1,2,\ldots,N$. Then it is straightforward to verify that 
\begin{eqnarray*}
x_o^TR_ox_o &=& \omega_1\|x_{o1}\|^2-2\mu_2x_{o1}^T\alpha x_{o2}^T\alpha+\omega_2\|x_{o2}\|^2\nonumber \\
&& -2 \mu_3 x_{o2}^T\alpha x_{o3}^T\alpha+\omega_3\|x_{o3}\|^2\nonumber \\
&& \vdots \nonumber \\
&&-2 \mu_N x_{oN-1}^T\alpha x_{oN}^T\alpha+\omega_N\|x_{oN}\|^2\nonumber \\
&= & \omega_1a_1^*a_1-\imath\mu_2a_1^*a_2+\imath\mu_2a_2^*a_1+\omega_2a_2^*a_2\nonumber \\
&& -\imath \mu_3 a_2^*a_3+\imath \mu_3 a_3^*a_2+\omega_3a_3^*a_3\nonumber \\
&& \vdots \nonumber \\
&&-\imath \mu_N a_{N-1}^*a_N+\imath \mu_N a_N^*a_{N-1}+\omega_Na_N^*a_N\nonumber \\
&=& a_o^\dagger\tilde R_o a_o
\end{eqnarray*}
where 
\[
a_o = \left[\begin{array}{l}a_{1}\\a_{2}\\\vdots\\a_{N}\end{array}\right] \in \cbb^N
\]
and 
\[
 \tilde R_o= \left[\begin{array}{rrrrr}\omega_1 & -\imath \mu_2 & & &\\
 \imath \mu_2 & \omega_2 & -\imath \mu_3&  & 0 \\
&  \imath \mu_3 & \omega_3 & \ddots &\\
0 & & \ddots & \ddots  & -\imath \mu_N\\
&&&  \imath \mu_N & \omega_N
\end{array}\right]. 
\]
Here $^\dagger$ denotes the complex conjugate transpose of a vector. 
From this, it follows that the real symmetric matrix $R_o$  is positive-definite if and only if the complex Hermitian matrix $\tilde R_o$ is positive-definite. 

To prove that $\tilde R_o$ is positive-definite, we first substitute the equations (\ref{mui}) and (\ref{muN}) into the definition of $\tilde R_o$ to obtain
\begin{eqnarray*}
\tilde R_o&=& \left[\begin{array}{rrrrr}\mu_1+\mu_2 & -\imath \mu_2 & & &\\
 \imath \mu_2 & \mu_2+\mu_3 & -\imath \mu_3&  & 0 \\
&  \imath \mu_3 & \mu_3+\mu_4 & \ddots &\\
0 & & \ddots & \ddots  & -\imath \mu_N\\
&&&  \imath \mu_N & \mu_N
\end{array}\right]\nonumber \\
&=& \tilde R_{o1} + \tilde R_{o2}
\end{eqnarray*}
where
\[
\tilde R_{o1} = \left[\begin{array}{rrrr}
 \mu_1  & 0 &\ldots  & 0\\
0 & 0  &\ldots  & 0\\
\vdots & & & \vdots\\
0 & 0  &\ldots  & 0
\end{array}\right] \geq 0
\]
and
\[
\tilde R_{o2} = \left[\begin{array}{rrrrr} \mu_{2}  & - \imath \mu_2 & & &\\
 +\imath \mu_2 &  \mu_2+ \mu_{3} & -\imath \mu_3&  & 0 \\
&  +\imath\mu_3 &  \mu_3+ \mu_{4} & \ddots &\\
0 & & \ddots & \ddots  & - \imath \mu_N\\
&&&  +\imath \mu_N &  \mu_N
\end{array}\right].
\]

Now, we can write
\begin{eqnarray*}
a_o^\dagger\tilde R_{o2} a_o &=&
\mu_2 a_1^*a_1-\imath\mu_2a_1^*a_2+\imath\mu_2a_2^*a_1+\mu_2a_2^*a_2\nonumber \\
&&+\mu_3a_2^*a_2 -\imath \mu_3 a_2^*a_3+\imath \mu_3 a_3^*a_2+\mu_4a_3^*a_3\nonumber \\
&& \vdots \nonumber \\
&&+\mu_{N-1}a_{N-1}^*a_{N-1}-\imath \mu_N a_{N-1}^*a_N\nonumber \\
&&+\imath \mu_N a_N^*a_{N-1}+\mu_Na_N^*a_N\nonumber \\
&=& \mu_2 (-\imath a_1^* + a_2^*)(\imath a_1 + a_2)\nonumber \\ 
&&+\mu_3 (-\imath a_2^* + a_3^*)(\imath a_2 + a_3)\nonumber \\ 
&&\vdots \nonumber \\
&&+\mu_N (-\imath a_{N-1}^* + a_N^*)(\imath a_{N-1} + a_N)\nonumber \\ 
&\geq& 0. 
\end{eqnarray*}
Thus, $\tilde R_{o2} \geq 0$. Furthermore, $a_o^\dagger\tilde R_{o2} a_o = 0$ if and only if 
\begin{eqnarray*}
a_2 &=& -\imath a_1;\nonumber \\
a_3 &=& -\imath a_2;\nonumber \\
&\vdots& \nonumber \\
a_N &=& -\imath a_{N-1}. \nonumber \\
\end{eqnarray*}
That is, the null space of $\tilde R_{o2}$ is given by 
\[
\mathcal{N}(\tilde R_{o2}) = \mbox{span}\{\left[\begin{array}{l}1\\-\imath\\ -1\\ \imath \\ 1 \\\vdots\\(-\imath)^{N-1}\end{array}\right]\}.
\]

The fact that $\tilde R_{o1} \geq 0$ and $\tilde R_{o2} \geq 0$ implies that $\tilde R_{o} \geq 0$. In order to show that $\tilde R_{o} > 0$, suppose that $a_o$ is a non-zero vector in $\mathcal{N}(\tilde R_{o})$. It follows that 
\[
a_o^\dagger \tilde R_{o}a_o = a_o^\dagger\tilde R_{o1}a_o+a_o^\dagger\tilde R_{o2}a_o = 0.
\]
Since $\tilde R_{o1} \geq 0$ and $\tilde R_{o2} \geq 0$, $a_o$ must be contained in the null space of $\tilde R_{o1}$ and the null space of $\tilde R_{o2}$. Therefore $a_o$ must be of the form
\[
a_o = \gamma \left[\begin{array}{l}1\\-\imath\\ -1\\ \imath \\ 1 \\\vdots\\(-\imath)^{N-1}\end{array}\right]
\]
where $\gamma \neq 0$. However, then
\[
a_o^\dagger \tilde R_{o1}a_o = \gamma^2 \tilde \mu_1 \neq 0
\]
and hence $a_o$ cannot be in the null space of $\tilde R_{o1}$. Thus, we can conclude that the matrix $\tilde R_{o}$ is positive definite and hence, the matrix  $R_{o}$ is positive definite. This completes the proof of the lemma. 
\end{proof}

We now verify that the condition (\ref{average_convergence}) is satisfied for the distributed  quantum observer under consideration. This proof follows along very similar lines to the corresponding proof given in \cite{PET14Ca}. We recall from Remark \ref{R1} that the quantity $\half  x_e(t)\trp R_o  x_e(t)$
remains constant in time for the linear system:
\[
\dot{ x}_e = A_o x_e= 2\Theta R_o  x_e.
\]
That is 
\begin{equation}
\label{Roconst}
\half  x_e(t) \trp R_o  x_e(t) = \half  x_e(0) \trp R_o  x_e(0) \quad \forall t \geq 0.
\end{equation}
However, $ x_e(t) = e^{2\Theta R_ot} x_e(0)$ and $R_o > 0$. Therefore, it follows from (\ref{Roconst}) that
\[
\sqrt{\lambda_{min}(R_o)}\|e^{2\Theta R_ot} x_e(0)\| \leq \sqrt{\lambda_{max}(R_o)}\| x_e(0)\|
\]\
for all $ x_e(0)$ and $t \geq 0$. Hence, 
\begin{equation}
\label{exp_bound}
\|e^{2\Theta R_ot}\| \leq \sqrt{\frac{\lambda_{max}(R_o)}{\lambda_{min}(R_o)}}
\end{equation}
for all $t \geq 0$.

Now since $\Theta $ and $R_o$ are non-singular,
\[
\int_0^Te^{2\Theta R_ot}dt = \half e^{2\Theta R_oT}R_o^{-1}\Theta ^{-1} - \half R_o^{-1}\Theta ^{-1}
\]
and therefore, it follows from (\ref{exp_bound}) that
\begin{eqnarray*}
\lefteqn{\frac{1}{T} \|\int_0^Te^{2\Theta R_ot}dt\|}\nonumber \\
 &=& \frac{1}{T} \|\frac{1}{2}e^{2\Theta R_oT}R_o^{-1}\Theta ^{-1} - \frac{1}{2}R_o^{-1}\Theta ^{-1}\|\nonumber \\
&\leq& \frac{1}{2T}\|e^{2\Theta R_oT}\|\|R_o^{-1}\Theta ^{-1}\| \nonumber \\
&&+ \frac{1}{2T}\|R_o^{-1}\Theta ^{-1}\|\nonumber \\
&\leq&\frac{1}{2T}\sqrt{\frac{\lambda_{max}(R_o)}{\lambda_{min}(R_o)}}\|R_o^{-1}\Theta ^{-1}\|\nonumber \\
&&+\frac{1}{2T}\|R_o^{-1}\Theta ^{-1}\|\nonumber \\
&\rightarrow & 0 
\end{eqnarray*}
as $T \rightarrow \infty$. Hence,  
\begin{eqnarray*}
\lefteqn{\lim_{T \rightarrow \infty} \frac{1}{T}\|\int_{0}^{T}  x_e(t)dt\| }\nonumber \\
&=& \lim_{T \rightarrow \infty}\frac{1}{T}\|\int_{0}^{T} e^{2\Theta R_ot} x_e(0)dt\| \nonumber \\
&\leq& \lim_{T \rightarrow \infty}\frac{1}{T} \|\int_{0}^{T} e^{2\Theta R_ot}dt\|\| x_e(0)\|\nonumber \\
&=& 0.
\end{eqnarray*}
This implies
\[
\lim_{T \rightarrow \infty} \frac{1}{T}\int_{0}^{T}  x_e(t)dt = 0
\]
and hence, it follows from (\ref{xe}) and (\ref{Coxbar}) that
\[
\lim_{T \rightarrow \infty} \frac{1}{T}\int_{0}^{T} z_o(t)dt = \left[\begin{array}{l}1\\1\\\vdots\\1\end{array}\right]z_p.
\]

Also, (\ref{zp_const}) implies 
\[
\lim_{T \rightarrow \infty} \frac{1}{T}\int_{0}^{T} z_p(t)dt = \left[\begin{array}{l}1\\1\\\vdots\\1\end{array}\right]z_p.
\]
Therefore, condition (\ref{average_convergence}) is satisfied. Thus, we have established the following theorem.

\begin{theorem}
\label{T1}
Consider a quantum plant of the form (\ref{plant}) where  $A_p = 0$. Then the distributed direct coupled quantum observer defined by equations (\ref{observer}),  (\ref{Rc}), (\ref{Rc_def}), (\ref{Ao}), (\ref{alphabeta}), (\ref{Co}), (\ref{mui}), (\ref{muN})  achieves time-averaged consensus convergence for this quantum plant.
\end{theorem}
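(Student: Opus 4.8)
The plan is to reduce the consensus condition (\ref{average_convergence}) to a statement about the time average of the error vector $x_e = x_o - \bar x_o$, and then to exploit the Hamiltonian structure of the error dynamics. First I would record that, because $A_p = 0$, the plant equation (\ref{xp}) together with $\alpha^T J \alpha = 0$ (skew-symmetry of $J$) makes the scalar $z_p = \alpha^T x_p$ conserved, which is (\ref{zp_const}). Next, the choice of $\omega_i$ and $\mu_i$ forced by (\ref{mui}) and (\ref{muN}) is exactly what makes $\bar x_o$ an equilibrium of the affine system $\dot x_o = A_o x_o + B_o z_p$, i.e. $A_o \bar x_o + B_o z_p = 0$; subtracting this identity from the $x_o$ dynamics yields the homogeneous error equation (\ref{xedot}), namely $\dot x_e = A_o x_e$. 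Since $C_o \bar x_o = \mathbf{1}\, z_p$ by (\ref{Coxbar}) (with $\mathbf{1}$ the all-ones vector) and $z_p$ is constant, it will suffice to prove that $\frac1T\int_0^T x_e(t)\,dt \to 0$.

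The key step is to extract a conserved quadratic form from (\ref{xedot}). I would write $A_o = 2\Theta R_o$ with $R_o$ the symmetric matrix defined just before Lemma \ref{L1}, and invoke Lemma \ref{L1} to conclude $R_o > 0$. By the computation in Remark \ref{R1}, the energy $\tfrac12 x_e(t)^T R_o x_e(t)$ is then constant along solutions of (\ref{xedot}). Combining this invariant with $R_o > 0$ pins the $R_o$-norm of the trajectory $e^{A_o t} x_e(0)$ to its initial value and yields the uniform bound (\ref{exp_bound}) on $\|e^{2\Theta R_o t}\|$ for all $t \ge 0$; the flow is bounded but need not decay.

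Finally I would average. Because $\Theta$ and $R_o$ are both nonsingular, $A_o = 2\Theta R_o$ is invertible, so the matrix integral is available in closed form, $\int_0^T e^{2\Theta R_o t}\,dt = \tfrac12\bigl(e^{2\Theta R_o T} - I\bigr) R_o^{-1}\Theta^{-1}$. Dividing by $T$ and bounding the $e^{2\Theta R_o T}$ factor with (\ref{exp_bound}) shows $\tfrac1T\int_0^T e^{2\Theta R_o t}\,dt \to 0$, hence $\tfrac1T\int_0^T x_e(t)\,dt = \bigl(\tfrac1T\int_0^T e^{2\Theta R_o t}\,dt\bigr)x_e(0) \to 0$. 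Propagating this through $x_o = x_e + \bar x_o$, $z_o = C_o x_o$ and (\ref{Coxbar}) gives $\tfrac1T\int_0^T z_o(t)\,dt \to \mathbf{1}\, z_p$, which together with (\ref{zp_const}) is precisely (\ref{average_convergence}), establishing the theorem.

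The main obstacle is conceptual rather than computational: the error system is conservative, so $e^{A_o t}$ is merely bounded and oscillatory and one cannot argue via asymptotic stability. The crux is that time-averaging annihilates a bounded oscillation \emph{provided} $A_o$ is invertible, and this is where the \emph{strict} positive-definiteness of $R_o$ from Lemma \ref{L1} --- not merely $R_o \ge 0$ --- is indispensable, both to secure the uniform bound (\ref{exp_bound}) and to render $\int_0^T e^{A_o t}\,dt$ a bounded quantity that the factor $1/T$ can suppress.
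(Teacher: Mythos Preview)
Your proposal is correct and follows essentially the same route as the paper's own proof: reduce to the homogeneous error dynamics $\dot x_e = A_o x_e$ via the equilibrium $\bar x_o$, write $A_o = 2\Theta R_o$ and use Lemma~\ref{L1} together with Remark~\ref{R1} to obtain the uniform bound (\ref{exp_bound}), then integrate $e^{2\Theta R_o t}$ explicitly using the invertibility of $A_o$ and divide by $T$ to kill the time average of $x_e$, finally pushing this through (\ref{Coxbar}) and (\ref{zp_const}) to obtain (\ref{average_convergence}). Your closing remark correctly identifies the role of \emph{strict} positive definiteness of $R_o$ as the hinge on which both the boundedness of the flow and the invertibility of $A_o$ depend.
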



\end{document}